\definecolor{myurlcolor}{rgb}{0,0,0.7}
\definecolor{myrefcolor}{rgb}{0.8,0,0} 
\def\beq{\begin{equation}}
\def\eeq{\end{equation}}
\def\be{\begin{equation}}
\def\ee{\end{equation}}
\def\ben{\begin{eqnarray}}
\def\een{\end{eqnarray}}
\def\beqa{\begin{eqnarray}}
\def\eeqa{\end{eqnarray}}
\def\eea{\end{array}}
\def\bea{\begin{array}}
\newcommand{\bei}{\begin{itemize}}
\newcommand{\eei}{\end{itemize}}
\newcommand{\bee}{\begin{enumerate}}
\newcommand{\eee}{\end{enumerate}}
\def\mt{main text\xspace}
\def\>{\rangle}
\def\<{\langle}
\def\omeas{B_1}
\def\ogentle{B_1^g}
\def\chsh{\beta_{\text{CHSH}}}
\def\chshMax{\beta^{\max}_{\text{CHSH}}}
\def\chain{\beta_{\text{chain}}}
\def\chainMax{\beta^{\max}_{\text{chain}}}
\def\betaMax{\beta^{\max}}
\def\betaMaxcl{\beta^{\max}_{cl}}
\def\dist{{\cal D}}
\def\loc{{\cal L}}
\def\inf{{\cal I}}
\def\relev{w}
\newtheorem{prop}{Proposition}
\newcommand\bovermat[2]{%
  \makebox[0pt][l]{$\smash{\overbrace{\phantom{%
    \begin{matrix}#2\end{matrix}}}^{\text{#1}}}$}#2}
\newcommand{\ket}[1]{| #1 \rangle}
\newcommand{\bra}[1]{\langle #1 |}
\newcommand{\proj}[1]{| #1 \rangle   \langle #1 |}
\def\tr{\textrm{Tr}}
\begin{document}

\title{Measurement uncertainty from no-signaling and non-locality}

\author{Justyna \L{}odyga}
\email{jlo@amu.edu.pl}
\affiliation{Faculty of Physics, Adam Mickiewicz University, 61-614 Pozna\'{n}, Poland}

\author{Waldemar K\l{}obus}
\affiliation{Faculty of Physics, Adam Mickiewicz University, 61-614 Pozna\'{n}, Poland}

\author{Ravishankar Ramanathan}
\affiliation{Institute for Theoretical Physics and Astrophysics,
University of Gda{\'n}sk, 80-952 Gda{\'n}sk, Poland}
\affiliation{National Quantum Information Centre of Gda\'{n}sk, 81-824 Sopot, Poland}
\affiliation{Laboratoire d'Information Quantique, Universit\'{e} Libre de Bruxelles, Belgium}

\author{Andrzej Grudka}
\affiliation{Faculty of Physics, Adam Mickiewicz University, 61-614 Pozna\'{n}, Poland}

\author{Micha\l\ Horodecki}
\affiliation{Institute for Theoretical Physics and Astrophysics,
University of Gda{\'n}sk, 80-952 Gda{\'n}sk, Poland}
\affiliation{National Quantum Information Centre of Gda\'{n}sk, 81-824 Sopot, Poland}

\author{Ryszard Horodecki}
\affiliation{Institute for Theoretical Physics and Astrophysics,
University of Gda{\'n}sk, 80-952 Gda{\'n}sk, Poland}
\affiliation{National Quantum Information Centre of Gda\'{n}sk, 81-824 Sopot, Poland}

\date{\today}

\begin{abstract}
One of the formulations of Heisenberg uncertainty principle, concerning so-called measurement uncertainty, states that the measurement of one observable modifies the statistics of the other. Here, we derive such a measurement uncertainty principle from two comprehensible assumptions: impossibility of instantaneous messaging at a distance (no-signaling), and violation of Bell inequalities (non-locality). The uncertainty is established for a pair of observables of one of two spatially
separated systems that exhibit non-local correlations. To this end, we introduce a gentle form of measurement which acquires
partial information about one of the observables. We then bound disturbance of the remaining observables
by the amount of information gained from the gentle measurement, minus a correction depending on the degree of non-locality.
The obtained quantitative expression resembles the quantum mechanical formulations, yet it is derived without the quantum formalism and complements the known qualitative effect of disturbance implied by non-locality and no-signaling.
\end{abstract}

\maketitle

In recent decades much effort was done to understand quantum mechanics ``from the outside''. Namely, one considers possible constraints
for correlations coming solely from no-signaling principle, and compares them with quantum mechanical constraints. The first observation was already made in the nineties by Popescu and Rohrlich \cite{PopescuRohrlich1994}. They showed that no-signaling constraints are much weaker, and allow for extremely strong correlations that violate the so called Bell-CHSH inequality \cite{Bell1964,CHSH1969} to the maximal possible extent, i.e., achieving maximal algebraic value of the Bell quantity.

On the other hand, much work was done in order to extract features of quantum formalism that are responsible for various non-classical effects, such as quantum computational speedup, reduction of communication complexity, quantum key distribution and expansion or amplification of weak randomness. It turns out that to achieve at least some of those effects, one does not need to employ the full quantum formalism, but just refer to its two features: the impossibility of faster-than-light communication (no-signaling) combined with Bell non-locality.
For example, to obtain secure key distribution, one uses just the no-signaling principle in conjunction with the fact that statistics obtained in distant labs violate Bell inequalities, exhibiting in this way Bell non-locality \cite{BarrettHardyKent2005}. However, such a fundamental rule as the Heisenberg uncertainty principle \cite{Heisenberg1927}, so far treated as a hallmark of quantum mechanics, has not yet been derived from these simple assumptions.

When considering the Heisenberg uncertainty principle, one may think of either of its two faces: the {\it preparation uncertainty principle}, stating that one cannot prepare a system in a state exhibiting peaked statistics for each of two incompatible observables \cite{Robertson1929,WinterWehner2010, Bialynicki2011}, and the {\it measurement uncertainty principle}, stating that by measuring one observable, one necessarily disturbs the statistics of the other observable \cite{Ozawa2003, Ozawa2004PhysLett,Erhart2012,BuschLahtiWerner2013}. Tomamichel and H\"anggi \cite{TomamichelHanggi2013} obtained the former principle from non-locality using the quantum formalism. However, the preparation uncertainty cannot be determined solely from 
no-signaling and non-locality, as it is not exhibited by the Popescu-Rohrlich box \cite{OppenheimWehner2010}. 
The measurement uncertainty principle, on the other hand, does not meet such restriction. It has a closely related formulation as an information gain versus disturbance trade-off \cite{FuchsPerez1996,Dariano2003,Maccone2006,Banaszek2006,BuscemiHayashiHorodecki2008,BuscemiHallOzawaWilde2014} and has become a basis for quantum cryptography \cite{BB84,B92}: a potential eavesdropper by gaining information about the cryptographic key necessarily disturbs the system, which can be noticed by the parties that are to establish the key.
The subject of measurement uncertainty principle in the context of non-locality and no-signaling was touched upon by Oppenheim and Wehner \cite{OppenheimWehner2010} who showed (in a non-quantitative manner) that Bell non-locality implies that a sharp measurement, i.e., the measurement with complete knowledge about the outcome, must cause disturbance.

In this Letter, we derive a {\it quantitative} measurement uncertainty relation, in the form of a trade-off implied by Bell non-locality and no-signaling. To this end, we introduce a notion of gentle measurement as well as a quantitative notion of disturbance, both applicable in 
the operational scenario, where the only objects are statistics of measurements. In particular, we consider a bipartite scenario where Bob, who exhibits non-local correlations with Alice (measured by degree of violation of a chosen Bell inequality), performs consecutive measurements of a pair of his observables. As a result, we find that the very act of his first measurement disturbs the statistics of the second measurement (this happens even if the first measurement is \textit{gentle}, i.e., where he does not acquire full knowledge about the result). Additionally, it appears that the magnitude of such disturbance increases not only with information gain but also with the strength of Bell inequality violation. We subsequently compare our result with its counterpart obtained within the quantum mechanical framework. 

In our findings we use traditional monogamy relations to obtain dynamical-type (or better kinematic-type) relations. The former are static, and state that
if two systems are non-locally correlated, the possible information present in a third system must be limited.
In contrast, we consider a time ordered scenario where a party measures observables one by one, exactly like in the measurement uncertainty principle. 

\textit{Information gain via a gentle measurement.---} 
We start with an initial bipartite system, one system possessed by Alice, the other by Bob.
Alice and Bob can sharply measure their observables $A_x$, $x=1,\ldots, n$ and $B_y$, $y=1,\ldots, m$, respectively, and obtain corresponding outcomes $a$  and $b$.
In addition, for Bob we introduce a gentle measurement responsible for the partial gain of information of one of his observables. Hereafter, without loss of generality, we choose the fixed observable $\omeas$ to be measured gently.
Bob will perform  the gentle measurement before he measures another observable, by coupling his measuring apparatus to the system. Equivalently, we can imagine that a third person --- Grace --- couples
some other system to Bob's one, performs some evolution, and takes away her system.
This results in an overall tripartite system: on two of them Alice and Bob can still measure their sharp observables,
while Grace can measure her single observable that represents gentle measurement of Bob's chosen observable $\omeas$.
In terms of no-signaling boxes, Grace has just one input, which we call $\ogentle$, with corresponding output $b_1^g$.

Formally, let us denote the statistics of the original bipartite system as  $p(a,b|A_x,B_y)$,
and the statistics of the tripartite system as $\tilde p(a,b,b_1^g|A_x,B_y,\ogentle)$,
where $\ogentle$ -- the gentle version of $\omeas$ -- is the only observable available to Grace.
The final bipartite statistics is then given by 
\be
\label{eq:finalstatistics}
\tilde p(a,b|A_x,B_y,\ogentle)=\sum_{b_1^g} \tilde p(a,b,b_1^g|A_x,B_y,\ogentle).
\ee

We shall now require that Grace's observable is indeed a gentle version of Bob's observable, by imposing two conditions (for details see Appendix \ref{secGENTLE}):

\bee[nolistsep,leftmargin=*]
\item  The act of Grace's measurement will not affect the statistics of the sharp observable $\omeas$, conditioned on any input and output of Alice, i.e.,
\be
p(b_1|\omeas,a,A_x)= \tilde  p(b_1|\omeas,\ogentle,a,A_x), \quad \forall {a,x}.
\label{eq:equal}
\ee

\item Grace's output $b_1^g$ will be correlated with Bob's output of measurement of
$\omeas$ (again conditioned on any Alice's input and output) resulting in the following conditional probability distribution
\be
\label{eq:weak}
 \tilde p(b_1^{g}=i|b_1=j,\omeas,\ogentle,a,A_x)=
  \begin{cases}
   \tfrac12+\epsilon & \text{if } i=j,  \\
   \tfrac12-\epsilon & \text{if } i\neq j, \\
  \end{cases}
\ee
where the parameter $\epsilon \in [0,\frac12]$ quantifies the information gain. For $\epsilon=\frac12$, complete information about the observable is acquired,
i.e., the sharp measurement gives the same output as the gentle measurement,
whereas for $\epsilon=0$, the outputs of gentle measurement are completely uncorrelated with the outputs of sharp measurement, hence the information gain is zero.\\
\eee

Let us emphasize that we will not restrict in any way what possible change may happen to the original bipartite box, other than by the above assumptions -- which are imposed just by the very definition of gentle measurement.
The resulting change will follow solely from no-signaling and non-locality.

\textit{Disturbance.---}
Consider first a (not necessarily quantum mechanical) state $\rho$ and a given observable. We want to quantify how much the observable  is disturbed by some other action on the state, that changes it into state $\tilde\rho$; in our case the action is the
gentle measurement of observable $\omeas$. A natural disturbance measure is the statistical distance between the probability distribution $p(b|B_y,a,A_x)$ obtained by measuring the observable $B_y \neq \omeas$
on state $\rho$ (i.e., prior to the gentle measurement) and the distribution $\tilde p(b|B_y,\ogentle,a,A_x)$ obtained by measuring this observable on state $\tilde\rho$ (after the gentle measurement is performed).
While deriving the disturbance from non-locality, we shall not show however that the disturbance holds for some particular state. 
Rather, we prove that disturbance occurs for some of the states produced by Alice.
When Alice chooses an observable $A_x$ and obtains an outcome $a$, a state $\rho_{a,A_x}$ is created at Bob's side. The state changed by gentle measurement is thus given by $\tilde \rho_{a,A_x}$.
Note that
since the gentle measurement is performed on Bob's system, then due to no-signaling we have $p(a|A_x)=\tilde p(a|A_x)$.
For a given choice of Alice's observable $A_x$ and an outcome $a$, the disturbance of the observable $B_y \neq \omeas$ is defined as
\be
D_{a,x}(B_y)= \sum_b |p(b|B_y,a,A_x) - \tilde p(b|B_y,\ogentle,a,A_x)|.
\ee
In this work we consider the average total disturbance, where we sum over all Alice's observables and all Bob's observables apart from $\omeas$ itself, and average over Alice's outcomes
\be
\dist= \sum_{a,x} p(A_x) p(a|A_x) \sum_{y\not = 1}  D_{a,x}(B_y).
\label{eq:dist}
\ee
In Appendix \ref{secDIST} we argue that the change of non-locality necessarily causes disturbance, proving that
for arbitrary Bell inequality (with moduli of coefficients bounded by $1$, w.l.o.g.), the average total disturbance $\dist$ \eqref{eq:dist} satisfies
\be
\label{eq:DB}
n \dist \geq |\beta(p)- \beta(\tilde p)|,
\ee
where $n$ denotes the number of Alice's measurement choices, and $\beta(p), \beta(\tilde p)$ are the values of the Bell quantity evaluated on initial statistics $p(a,b|A_x,B_y)$ and final statistics $\tilde p(a,b|A_x,B_y,\ogentle)$ given by Eq. \eqref{eq:finalstatistics}, respectively.

\textit{Relevance of Bell inequalities for observable.---}
It could happen that a chosen Bell inequality does not cover some of the observables.
For example, in Bell-CHSH inequality for a scenario where Alice and Bob hold $n=2$ and $m=3$ observables, respectively, one of Bob's observables is not included. 
Therefore, such observable does not cause any disturbance.

To quantify the ability of the observable $\omeas$ to disturb the other observable, given a specific Bell inequality, we introduce a new quantity, namely the notion of relevance $\relev(\omeas)$. For simplicity, and due to our convention that the gentle measurement is always performed on a fixed observable $\omeas$, in $\relev(\omeas)$ we neglect the argument  $\omeas$ and define the relevance $\relev$ as 
\be\label{eq:relev}
\relev= \betaMax - \betaMax_1, 
\ee
where $\betaMax$ denotes the maximal value of Bell quantity for no-signaling probabilistic theories and $\betaMax_1$ the maximal value of Bell quantity where the observable $\omeas$ is deterministic. 
The relevance $\relev$  \eqref{eq:relev} measures how far the observable is from being deterministic, i.e., it quantifies its degree of randomness. Therefore, for the increasing value of the relevance $\relev$, we observe stronger disturbance properties of the observable $\omeas$. 

\begin{figure}
\centering
\includegraphics[height=.27\textwidth]{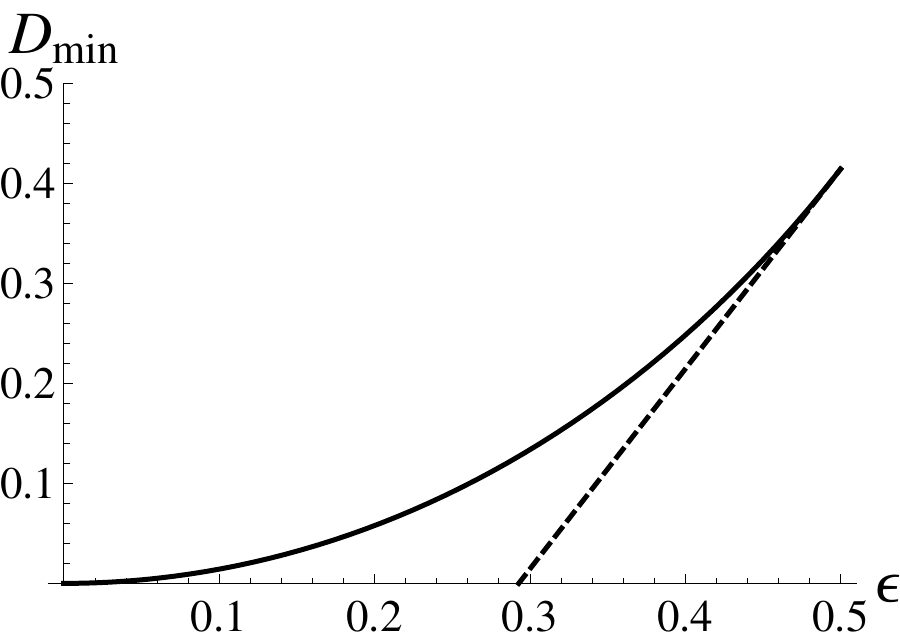}
\caption{Lower bound $\dist_{\min}$ on average total disturbance obtained from: quantum predictions (thick line), no-signaling principle (dashed line) for the case of CHSH inequality, where we choose $\chsh=2 \sqrt{2}$ corresponding to maximally non-local correlations attainable within the framework of quantum mechanics.}
\label{fig1}
\end{figure}

For that reason, the relevance $\relev$ \eqref{eq:relev} determines the strength of a  monogamy relation related to the value $\beta$ of a chosen Bell inequality
\be
\label{eq:mongen}
\beta + \relev \<\ogentle\omeas\> \leq \betaMax,
\ee
where $\<\ogentle\omeas\>$ stands for a correlation function between $\ogentle$ and $\omeas$. In Appendix \ref{secRELEV} we provide a proof for the relation \eqref{eq:mongen}, and show that for the CHSH and chain Bell inequality: $\relev= 2$ for any chosen observable, whereas for so called total function XOR games (a more general class of correlation Bell inequalities with binary outputs): $\relev\geq \min(\betaMax-\betaMaxcl, n)$, where $\betaMaxcl$ denotes the maximal classical value of the Bell quantity.

\textit{Measurement uncertainty principle.---}
We now present our main result, i.e., the trade-off between information gained in the gentle measurement and the disturbance caused by it on the remaining observables. Consider arbitrary Bell inequality, and rescale it so that it can be written as $\beta=\sum_{a,b,x,y} c(a,b,A_x,B_y) p(a,b|A_x,B_y)$, where the coefficients are bounded as $|c(a,b,A_x,B_y)| \leq 1$. For so defined Bell inequality $\beta$, the trade-off is of the following general form
\be
n \dist \geq  \relev \inf-  \loc.
\label{eq:general}
\ee
Here $n$ is the number of Alice's observables; disturbance $\dist$ is given by Eq. \eqref{eq:dist}; the relevance $\relev$ by Eq. \eqref{eq:relev}; the information gain $\inf$ is
defined as $2 \epsilon$ with  $\epsilon$  defined by Eq. \eqref{eq:weak}
(the factor 2 is added for technical reasons, actually $2\epsilon$ has the interpretation of correlation
function between $\ogentle$ and $\omeas$, cf. Appendix \ref{secTRADE});
and finally degree of locality ${\loc} = \betaMax - \beta$ reports how the non-locality of the system departs from maximal possible non-locality, quantified by the violation of a chosen Bell inequality. One can note that whenever the local content $\loc$ vanishes (we are in the extreme point of no-signaling correlations), arbitrarily small information gain causes disturbance.

\textit{Examples.---}
Two exemplary particular trade-offs can be obtained from CHSH inequality, and its generalization -- chain Bell inequality.
The CHSH inequality reads
\be
\chsh=\< A_1 B_1\> + \< A_1 B_2\> + \< A_2 B_1\> - \< A_2 B_2\> \leq 2
\ee
with maximal value $\chshMax=4$.
There are just two observables on either side, thus when Bob gently measures $\omeas$, he disturbs the observable $B_2$, and
the trade-off stands as
\be\label{exCHSH}
{\dist} = D(B_2) \geq 2 \epsilon -\frac12 (4- \chsh),
\ee
where we used Eq. \eqref{eq:general} with $n=2$, $\relev=2$,  $\inf= 2 \epsilon$ and ${\loc} = 4 - \chsh$. For maximally non-local correlations exhibited by so called {\it Popescu-Rohrlich box}, $\chsh=4$,
we simply have $ {\dist} \geq {\inf}$. For non-maximally non-local correlations,  there is some threshold value of $\epsilon$,
for which the inequality \eqref{exCHSH} is non-trivial. For example, at the Tsirelson bound $\chsh=2 \sqrt{2}$, attained for maximal correlations allowed in quantum regime, $\epsilon_{th}=0.293$ as depicted in Fig. \ref{fig1} (dashed line).

\begin{figure}
\centering
\includegraphics[height=.27\textwidth]{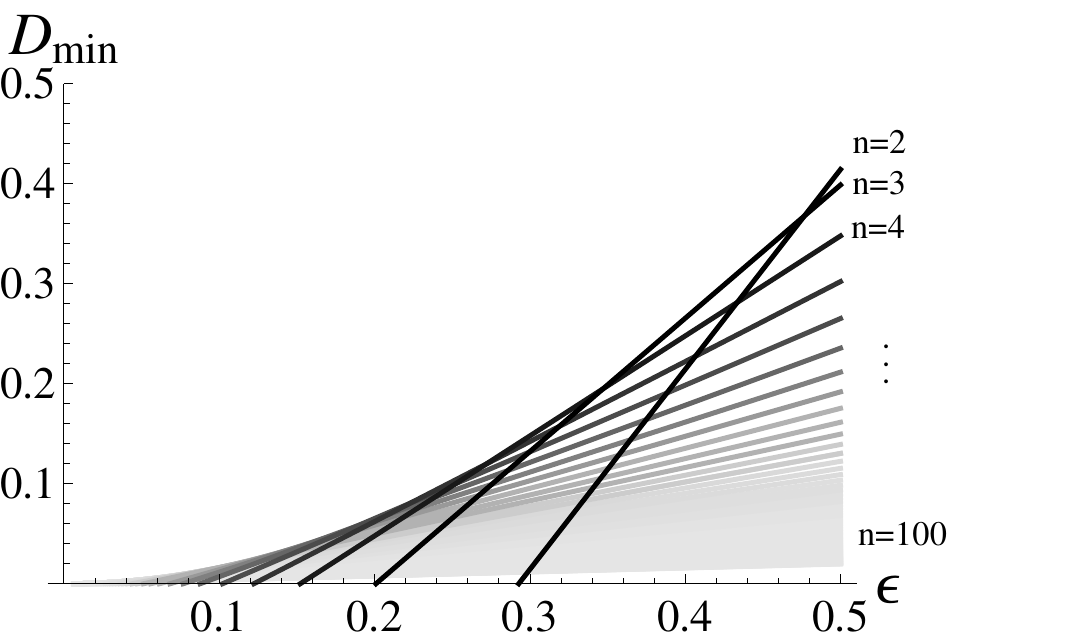}
\caption{Lower bound $\dist_{\min}$ on average total disturbance obtained from non-locality and no-signaling principle for the case of chain Bell inequality, where we choose $\chain=2 n \cos(\frac{\pi}{2 n})$ \cite{Wehner2006} with $n$ denoting the number of observables.}
\label{fig2}
\end{figure}

The chain inequality \cite{BraunsteinCaves1990} is given by
\ben
\label{eq:chain}
\chain &=& \sum_{k=1}^{n-1}  (\< A_k B_k\> + \<A_k B_{k+1}\>) + \< A_n B_n\> - \<A_n B_1\> \nonumber \\
&\leq& 2n-2,
\een
and $\chainMax=2n$ .
Analogous to the CHSH inequality, for the gentle measurement of $\omeas$, we obtain
\be\label{eq:exchain}
{\dist} = \sum_{i\not= 1}D(B_i) \geq \frac{4}{n} \epsilon - \frac{1}{n} (2n- \chain).
\ee
The dependence of disturbance on information gain, as well as on number of observables $n$ is presented
in Fig. \ref{fig2}. Note that the larger the number of observables, the more the threshold $\epsilon_{th}(n)$ moves towards
zero. At the same time, the disturbance goes down as $O(\frac1n)$.

In Appendix \ref{secGCHAIN} we present another example of Bell inequality -- generalized chain inequality -- in a form of total XOR game for which we provide an optimal quantum strategy. It appears that for some range of parameters the obtained disturbance can be even greater (going down with number of observables $n$ as $O(\frac{1}{n^{1/2+\delta}})$ for small $\delta > 0$) than in previous two examples.

\textit{Comparison with quantum uncertainty.---}
We shall now examine, how much the uncertainty imposed solely by non-locality  in no-signaling world is weaker than that implied by non-locality in the quantum mechanical world. To this end, we use quantum monogamy relation for the case of CHSH (for derivation see Appendix \ref{secQMON})
\be
\label{eq:qmonCHSH}
\left(\chsh \right)^2 + 4 |\<\ogentle \omeas\>|^2 \leq 8,
\ee
which together with Eq. \eqref{eq:DB} gives the following trade-off
\be
D_{q}(B)\geq \frac12 \left(\chsh - \sqrt{8- 4(2\epsilon)^2}\right) ,
\ee
with $\<\ogentle \omeas\>=2 \epsilon$. In Fig. \ref{fig1} we illustrate this result for $\chsh=2 \sqrt{2}$ (thick line) and compare with its counterpart in no-signaling world (dashed line). One can notice that the minimal disturbance in the former case is greater than for the latter. Such behavior is expected since no-signaling constraints are in general weaker than quantum mechanical ones \cite{MassarPironio2005}.

\textit{Discussion.---}In this Letter, we have developed a new, more perceptive way of obtaining the measurement uncertainty principle from no-signaling and non-locality. 
In particular, we considered a bipartite scenario where one party chooses to measure one of his observables, whereas the second party first performs a gentle measurement of one observable (gaining only partial information about the outcome) and then, a strong measurement of another observable (where the information gain is maximal). Subsequently, assuming only impossibility of superluminal communication between two parties (i.e., the no-signaling principle) and violation of Bell inequality, we have examined a relation between information gain and disturbance implied by the very act of the gentle measurement. 
Our results for the case of sharp measurement (i.e., $\epsilon=\frac12$) reproduce the extreme case discussed by Oppenheim and Wehner in \cite{OppenheimWehner2010}.

Remarkably, while, as we have shown, non-locality implies measurement uncertainty, the connection between preparation uncertainty 
and non-locality is quite opposite: it has been shown \cite{OppenheimWehner2010} that  preparation uncertainty excludes too strong 
non-locality (cf. \cite{RaviPHorodecki2015}).

Our results indicate that for general probabilistic theories obeying the no-signaling principle, the disturbance 
implied by statistics that can be observed in labs (i.e., the statistics predicted by quantum mechanics)
is trivial until information gain reaches some threshold value of $\epsilon_{th}$. 
This threshold can be shifted towards zero, by considering more observables (as in the case of chain Bell inequality).

Moreover, our trade-off has the following cryptographic interpretation. Alice prepares a bipartite system
and sends one subsystem to Bob. If the latter subsystem is intercepted and measured by an eavesdropper,
then, at the end, Alice and Bob share a disturbed box. For this reason, our results can have potential applications in cryptography based on sending states as in BB84 protocol rather than by performing measurements on shared entangled states of unknown origin.

An open question would be to obtain ultimate envelope describing the trade-off, i.e., to find the largest possible disturbance for a given information gain. In our work, we have found a Bell inequality that leads to disturbance partially greater than for the usual chain inequality, however we only observed it to happen for large number of observables $n$. Therefore, there still remains an open question of how to obtain the optimal Bell inequality implying the largest possible disturbance for a given information gain $\epsilon$, irrespective of the value of $n$ (for the whole range of $n$). Thus, so far our best bound for sought envelope is the one given by chain inequalities.

\begin{acknowledgments}
We would like to thank P. Kurzy{\'n}ski, D. Lasecki and A. W{\'o}jcik for helpful discussions. This work was supported by ERC Advanced Grant QOLAPS and by John Templeton Foundation. The opinions expressed in this publication are those of the authors and do not necessarily reflect the views of the John Templeton Foundation. R.R. acknowledges support from the research project  ``Causality in quantum theory: foundations and applications'' of the Fondation Wiener-Anspach and from the Interuniversity Attraction Poles 5 program of the Belgian Science Policy Office under the grant IAP P7-35 photonics@be.
\end{acknowledgments}

\onecolumngrid

\begin{appendix}
\section{Gentle measurement}\label{secGENTLE}

In this section we provide a more detailed description of the gentle measurement of observable $\omeas$ performed by Bob.
Let us explicitly state the assumptions that the gentle measurement should satisfy. These assumptions are natural and, in particular, are satisfied by a quantum gentle measurement, as we shall see later. Suppose first that we do not measure the gentle observable, but only the sharp one.
The probability distribution of the outcome is denoted by $p(b_1|\omeas,a,A_x)$. Let us also consider a situation where the observable $\omeas$ is first measured gently (denoted as $\ogentle$) and then sharply.
Since, {\it a priori}, the statistics of the latter sharp measurement might be
disturbed by the preceding gentle measurement, we will for a while denote its outcome by $b_1'$. The corresponding resulting probability distribution we denote by $\tilde p(b_1',b_1^g|\omeas,\ogentle,a,A_x)$.
We will now make two assumptions. First, we assume that the marginal probability of outcome $b_1'$ is the same as  that of $b_1$, i.e.,
\be
\tilde p(b_1'=j|\omeas,\ogentle,a,A_x)=p(b_1=j|\omeas,a,A_x),
\ee
where $\tilde p(b_1'|\omeas,\ogentle,a,A_x)=\sum_{b_1^g} \tilde p(b_1',b_1^g|\omeas,\ogentle,a,A_x)$, for any state of the system (recall that various states of Bob's system are prepared by
different choices of Alice's observable  and by different outcomes of her measurements).
Second, we assume that the conditional probability distribution computed from the above mentioned joint probability distribution is given by
\be
\label{gentle_sm}
\tilde p(b_1^{g}=i|b_1'=j,\omeas,\ogentle,a,A_x)=
  \begin{cases}
   \tfrac12+\epsilon & \text{if } i=j,  \\
   \tfrac12-\epsilon & \text{if } i\neq j, \\
  \end{cases}
\ee
which is almost like Eq. \eqref{eq:weak} of the \mt. The only difference is that instead of $b_1$ as in Eq. \eqref{eq:weak},
we have $b_1'$. However, our first assumption implies, in particular, that joint probability distribution  of $b_1'$
with Alice's outcomes is the same as that of $b_1$. Thus for all our purposes, the two random variables are indistinguishable.
Hence we can drop the prime in the above conditions, obtaining Eq. \eqref{eq:weak}.\\

We will now show that quantum measurements satisfy the above assumptions.
To this end, consider a sharp measurement of $\omeas$ described by projection operators
\ben
\label{eq:POVM1}
\hat{P}_0&=& \proj{0}, \\
\label{eq:POVM2}
\hat{P}_1&=& \proj{1},
\een
performed on an arbitrary qubit state
\be
|\Psi\>=\beta |0\> + \sqrt{1-\beta^2}|1\>,
\label{eq:initialBstate}
\ee
with $\beta \in \mathbb{R}, 0 \leq \beta \leq 1$, which leads to the following marginal probability distributions for outcomes $b_1 \in \left\{0,1\right\}$
\ben
\label{original1}
p(b_1=0|\omeas)&=&\beta^2,\\
\label{original2}
p(b_1=1|\omeas)&=&1-\beta^2.
\een
The gentle measurement for $\omeas$ is described by Kraus operators
\ben
\label{EE0}
\hat{E}_0&=&\sqrt{\tfrac12+\epsilon}\proj{0}+\sqrt{\tfrac12-\epsilon}\proj{1}, \\
\label{EE1}
\hat{E}_1&=&\sqrt{\tfrac12-\epsilon}\proj{0}+\sqrt{\tfrac12+\epsilon}\proj{1}.
\een
In order to show that with such definitions of sharp and gentle measurements, the two assumptions mentioned above are satisfied, we consider a procedure where the gentle measurement is followed by the sharp one.

The marginal probability distributions for outcomes $b_1^{g} \in \left\{0,1\right\}$ are given by
\ben
\label{eq:aver1a}
\tilde p(b_1^{g}=0|\ogentle)&=&\tr (\hat{E}_0 \proj{\Psi} \hat{E}^\dagger_0)=\left(\frac12 +\epsilon\right) \beta^2 + \left(\frac12 -\epsilon\right)\left(1-\beta^2\right), \\
\label{eq:aver1b}
\tilde p(b_1^{g}=1|\ogentle)&=&\tr (\hat{E}_1 \proj{\Psi} \hat{E}^\dagger_1)=\left(\frac12 -\epsilon\right) \beta^2 + \left(\frac12 +\epsilon\right)\left(1-\beta^2\right),
\een
where $\ket{\Psi}$ is described in Eq.(\ref{eq:initialBstate}), and $\hat{E}_0$, $\hat{E}_1$ in Eqs.(\ref{EE0})-(\ref{EE1}).

After obtaining the outcomes $b_1^{g}=0$ and $b_1^{g}=1$, the post-measurement states are given by
\ben
\label{eq:poststate1}
\ket{\Psi^{g}_0} = \frac{\hat{E}_0 \ket{\Psi}}{\sqrt{\bra{\Psi} \hat{E}^\dagger_0 \hat{E}_0  \ket{\Psi} }}=\frac{\sqrt{\frac12 + \epsilon} \beta}{\sqrt{\left(\frac12 +\epsilon\right) \beta^2 + \left(\frac12 -\epsilon\right)\left(1-\beta^2\right)}} \ket{0} + \frac{\sqrt{\frac12 - \epsilon} \sqrt{1-\beta^2}}{\sqrt{\left(\frac12 +\epsilon\right) \beta^2 + \left(\frac12 -\epsilon\right)\left(1-\beta^2\right)}} \ket{1},
\een
\ben
\label{eq:poststate2}
\ket{\Psi^{g}_1} = \frac{\hat{E}_1 \ket{\Psi}}{\sqrt{\bra{\Psi} \hat{E}^\dagger_1 \hat{E}_1  \ket{\Psi} }}= \frac{\sqrt{\frac12 - \epsilon} \beta}{\sqrt{\left(\frac12 -\epsilon\right) \beta^2 + \left(\frac12 +\epsilon\right)\left(1-\beta^2\right)}} \ket{0} + \frac{\sqrt{\frac12 + \epsilon} \sqrt{1-\beta^2}}{\sqrt{\left(\frac12 -\epsilon\right) \beta^2 + \left(\frac12 +\epsilon\right)\left(1-\beta^2\right)}} \ket{1}.
\een
The second measurement is thus performed on above post-measurement states, and leads to the following conditional probabilities for the outcome $b_1'=0$
\ben
\label{eq:aver2a}
\tilde p(b_1'=0|b_1^{g}=0,\omeas,\ogentle)&=&\tr (\hat{P}_0 \proj{\Psi^{g}_0} \hat{P}^\dagger_0)=\frac{\left(\frac12 +\epsilon\right) \beta^2}{\left(\frac12 +\epsilon\right) \beta^2 + \left(\frac12 -\epsilon\right)\left(1-\beta^2\right)}, \\
\label{eq:aver2b}
\tilde p(b_1'=0|b_1^{g}=1,\omeas,\ogentle)&=&\tr (\hat{P}_0 \proj{\Psi^{g}_1} \hat{P}^\dagger_0)=\frac{\left(\frac12 -\epsilon\right) \beta^2}{\left(\frac12 -\epsilon\right) \beta^2 + \left(\frac12 +\epsilon\right)\left(1-\beta^2\right)},
\een
where $\hat{P}_0$ is given in Eq.(\ref{eq:POVM1}), and $\ket{\Psi^{g}_0}$ and $\ket{\Psi^{g}_1}$ in Eqs.(\ref{eq:poststate1})-(\ref{eq:poststate2}).\\

Using  $\tilde p(b_1'=j|\omeas,\ogentle)=\sum_{i=0,1} \tilde p(b'_1=j|b_1^{g}=i,\omeas,\ogentle) \tilde p(b_1^{g}=i|\ogentle)$ together with  Eqs.(\ref{eq:aver1a})-(\ref{eq:aver1b}) and Eqs.(\ref{eq:aver2a})-(\ref{eq:aver2b}), we obtain that $\tilde p(b_1'=0|\omeas,\ogentle)=\beta^2$, hence it is equal to $p(b_1=0|\omeas)$ of Eq. \eqref{original1}.
The same reasoning applies to the case of $b_1'=1$.
Therefore, we have showed that our first assumption works for quantum mechanics.

To show that our second assumption holds (i.e., that Eq. \eqref{gentle_sm} holds)
we write
\be
\tilde p(b_1^{g}=i|b_1'=j,\omeas,\ogentle)= \tilde p(b_1'=j|b_1^{g}=i,\omeas,\ogentle)  \frac{\tilde p(b_1^{g}=i|\ogentle)}{\tilde p(b_1'=j|\omeas,\ogentle)}.
\ee
Now, replacing  $\tilde p(b_1'=j|\omeas,\ogentle)$ with  $p(b_1=j|\omeas)$ (since they are equal), and inserting
Eqs.(\ref{eq:aver1a})-(\ref{eq:aver1b}) and Eqs.(\ref{eq:aver2a})-(\ref{eq:aver2b}),  we obtain the required identity.

\section{Disturbance}\label{secDIST}
In this section we examine the relation between change of non-locality and the disturbance caused in the system. In particular, we prove Eq. \eqref{eq:DB} from the main text given in the following form
\be
\label{eq:SDB}
n \dist \geq |\beta(p)- \beta(\tilde p)|,
\ee
where $n$ is the number of Alice's observables and $\beta(p), \beta(\tilde p)$ are the values of the Bell quantity evaluated on initial $p(a,b|A_x,B_y)$ and final statistics $\tilde p(a,b|A_x,B_y,\ogentle)=\sum_{b_1^g} \tilde p(a,b,b_1^g|A_x,B_y,\ogentle)$, respectively. We prove that the change in non-locality, quantified by the change of an arbitrary Bell quantity (with coefficients bounded by $1$), inevitably leads to non-trivial disturbance.\\
\begin{proof}
First, note that
any Bell inequality can be written (up to a constant factor) as
\be
\label{eq:SBell}
\sum_{a,b,x,y} c(a,b,A_x,B_y) p(a,b|A_x,B_y) \leq \beta_{cl},
\ee
where 
\be
\label{eq:Scoeff}
|c(a,b,A_x,B_y)|\leq 1.
\ee
We then have
\ben
&&|\beta(p) - \beta(\tilde p)| = \left|
\sum_{a,b,x,y} c(a,b,A_x,B_y) \left(p(a,b|A_x,B_y) -\tilde p(a,b|A_x,B_y,\ogentle) \right)\right| \leq  \nonumber \\
&&\leq 
\sum_{a,b,x,y} \left|(p(a,b|A_x,B_y) -\tilde p(a,b|A_x,B_y,\ogentle) \right| =  \nonumber \\
&& =
\sum_{a,b,x,y\not=1} \left|(p(a,b|A_x,B_y) -\tilde p(a,b|A_x,B_y,\ogentle) \right| =  \nonumber \\
&& = 
n \sum_{y\not=1}  \left(\sum_{a,x} \frac{1}{n} p(a|A_x) \sum_b |p(b|B_y,a,A_x) - \tilde p(b|B_y,\ogentle,a,A_x)| \right)=
\nonumber  \\
&& =
n \sum_{y\not=1}  \left(\sum_{a,x} \frac{1}{n} p(a|A_x) D_{a,x}(B_y)\right) =n  \dist,
\een
where in the first equality we used Eq. \eqref{eq:SBell}, in the first inequality: Eq.\eqref{eq:Scoeff}, and in the second equality: Eq. \eqref{eq:equal} from the main text, i.e., that $p(b_1|\omeas,a,A_x)= \tilde  p(b_1|\omeas,\ogentle,a,A_x), \forall {a,x}$. In the last equality we assume that all the choices of Alice's observable are equiprobable, i.e., $p(A_x)=\frac{1}{n} \forall x$.
\end{proof}

\section{Relevance of Bell inequalities for observable}\label{secRELEV}
In the main text, for a chosen observable $B_1$ we defined the relevance $\relev(B_1)\equiv \relev$ given by
\be
\label{eq:Srelev}
\relev= \beta^{\max}-\beta^{\max}_{1},
\ee
with $\beta^{\max}$ standing for maximal algebraic value of the Bell quantity and $\beta^{\max}_{1}$ for maximal value of Bell quantity with deterministic observable $B_1$.\\

\subsection{Monogamy relation with relevance $\relev$}\label{secRELEV1}
In this section, we consider the situation where Alice and Bob measure $\vert \mathcal{X} \vert = n$ and $\vert \mathcal{Y}\vert = m$ number of binary observables $A_x$ and $B_y$, respectively. Let us first prove the following monogamy relation (related to some Bell quantity $\beta$) whose strength is determined by the relevance $\relev$ \eqref{eq:Srelev}
 \be
\beta + \relev \<\ogentle\omeas\> \leq \beta^{\max},
\label{eq:Smonogamy}
\ee
where $\<\ogentle\omeas\>$ describes the correlations between observables $\ogentle$ and $\omeas$.

\begin{proof}

Let us consider the tripartite box $\tilde p(a,b,b_1^g|A_x,B_y,\ogentle)$ and convex decompose it as 
\be 
\tilde p(a,b,b_1^g|A_x,B_y,\ogentle) = \sum_i r_i  p_i(a,b|A_x,B_y) \otimes q_i(b_1^g|\ogentle),
\ee 
with $r_i \geq 0, \sum_i r_i = 1$. This can be done owing to the fact that Grace measures a single observable $\ogentle$. By convexity, it is sufficient to restrict the analysis to boxes of the form $p(a,b|A_x,B_y) \otimes q(b_1^g|\ogentle)$.  
Let us further decompose the bipartite box $p(a,b|A_x, B_y)$ shared by Alice and Bob into two types of extremal boxes. The extremal boxes in the two-party scenario for arbitrary number of inputs and binary outputs were classified in \cite{JonesMasanes2005}. From this classification, we see that with probability $p_N$ we have a box with fully random observable $B_1$, and with probability $p_D$, a box where the observable $B_1$ is deterministic. In the first case, the statistics of $\omeas$ is fully correlated with other observables of the Alice-Bob's box producing a fully random output which gives $\<\ogentle\omeas \>_N=0$, whereas in the second case, the statistics of $\omeas$ being uncorrelated with other observables is deterministic which for appropriate choice of $q(b_1^g|\ogentle)$ gives $\<\ogentle\omeas \>_D=1$.  Then $\<\ogentle\omeas\> = p_N \<\ogentle\omeas\>_N + p_D \<\ogentle\omeas\>_D $, where $ \<\ogentle\omeas\>_N=0$ and $ \<\ogentle\omeas\>_D=1$. Therefore, $p_D = \<\ogentle\omeas\>$.
Now, for any Bell quantity $\beta$
\be
\beta \leq p_N \beta^{\max} + p_D \beta^{\max}_{1} = (1- \<\ogentle\omeas\>) \beta^{\max} + \<\ogentle\omeas\> \beta^{\max}_{1} =  \beta^{\max}  - (\beta^{\max} - \beta^{\max}_{1}) \<\ogentle\omeas\>
\ee
and we recover \eqref{eq:Smonogamy} with substitution \eqref{eq:Srelev}.
\end{proof}

\subsection{Examples of relevance $\relev$}\label{secRELEV2}
\begin{enumerate}
\item For total function XOR games with uniform probabilities of inputs, i.e., correlation Bell inequalities of binary outputs with $\pm 1$ coefficients.\\

The relevance $\relev$ is defined in Eq. \eqref{eq:Srelev}. Let us restrict the analysis to extremal boxes \cite{JonesMasanes2005}. In order to obtain $\betaMax_1$ we must consider all extremal boxes with $\omeas$ being deterministic. In general, such boxes can have more than one deterministic observable. Suppose then that the box is defined by having $k_A$ deterministic observables on Alice's side and $k_B$ deterministic observables on Bob's side. In such a case, the matrix of correlators $C=\<A_x B_y \>$, where $x=1,...,n$ and $y=1,...,m$ takes the form\\
\be\label{Scorrmat1}
\begin{matrix}
C
 =
 \begin{bmatrix}
\bovermat{\normalsize{$m$-$k_B$}} {\begin{matrix} \text{ }&\text{ }& \text{ } \\ \text{ }&[\beta_{\text{ns}}]  &\text{ } \\ \text{ } & \text{ } & \text{ } \end{matrix}} & \vline & \hspace{-0.5em} \bovermat{\normalsize{$k_B$}}{\begin{matrix} \text{ } \\ \hspace{0.4em} [0] \\ \text{ } \end{matrix}} \\[0.5em]
\hline
\begin{matrix} \text{ } &  [0] & \text{ } \end{matrix} & \vline & \begin{matrix} [\beta_{cl}] \end{matrix} \\[0.5em]
  \end{bmatrix}
  \begin{aligned}
  &\left.\begin{matrix}
   \vspace{0.5em}
  \\
  \\
  \end{matrix} \right\} %
 n-k_A,\      &\begin{matrix}
  \end{matrix}\\ %
  &\left.\begin{matrix}
   \\ 
  \end{matrix}\right\}%
  k_A,\     \end{aligned}
 \end{matrix}
 \ee

where $[0]$ denotes the zero matrix with respective dimensions, and $[\beta_{\text{ns}}]$ ($[\beta_{cl}]$) the matrix of correlators for the no-signaling (classical) part of the box.
Analyzing the non-zero part of the matrix $C$ \eqref{Scorrmat1}, we conclude that the Bell quantity for such box depends on the number of deterministic observables, such that
\ben
\beta_1 \leq \max \{ (n-k_A)(m-k_B)+ k_Ak_B, \betaMaxcl \}.
\een
Notice that the value $(n-k_A)(m-k_B)+ k_Ak_B$ is maximized only if $k_A =n$ and $k_B =m$, in which case $k_Ak_B=\betaMaxcl$, or if $k_A =0$ and $k_B =1$ where the correlation matrix becomes\\
\be\label{Scorrmat2}
\begin{matrix}
C'
 =
 \begin{bmatrix}
\bovermat{\normalsize{$m$-$1$}} {\begin{matrix} \text{ }&\text{ }& \text{ } \\ \text{ }&\text{ }& \text{ } \\ \text{ }&[\beta_{\text{ns}}]  &\text{ } \\ \text{ } & \text{ } & \text{ } \\ \text{ } & \text{ } & \text{ } \end{matrix}} & \vline & \hspace{-0.5em} \bovermat{\normalsize{$1$}}{\begin{matrix} \text{ } \\ \text{ } \\ \hspace{0.4em} [0] \\ \text{ } \\ \text{ } \end{matrix}} \\[0.5em]
  \end{bmatrix}
  \begin{aligned}
  &\left.\begin{matrix}
   \vspace{0.5em}
  \\
  \\
  \\
  \\
  \end{matrix} \right\} %
 n.\    \end{aligned}
 \end{matrix}
 \ee
Hence, we obtain
\ben
\label{Sbetatotal}
\beta_1 \leq\max \{ n(m-1),\betaMaxcl \}.
\een
Eventually, substituting RHS of Eq. \eqref{Sbetatotal} to the definition of relevance $\relev$ \eqref{eq:Srelev}, we have
\be
\label{eq:Srelevtotal}
\relev_{\text{tot}}  \geq \min(n, \betaMax-\betaMaxcl),
\ee
where we derived the first term in the bracket by taking $\betaMax= n m $. Note that $\relev_{\text{tot}} = n$ for a generic total function XOR game, when the coefficient matrix $C$ is a random Bernoulli matrix, i.e., each entry $C_{ij}$ takes value $\pm 1$ with probability $\frac12$ independent of other entries. This can be seen for example from the bound on $\| \cdot \|_{\infty \rightarrow 1}$ shown in \cite{GT09} which translates to the statement that for such random \textsc{XOR} games, the expected classical value is bounded as
\be 
\betaMaxcl \leq 2( n \sqrt{m} + m \sqrt{n}).
\ee

\item For Bell-CHSH inequality.\\
Directly from the value of relevance $\relev$ obtained for total function XOR games in Eq.  \eqref{eq:Srelevtotal} with the substitution: $n=2$, $\betaMax=4$ and $\betaMaxcl=2$, we obtain
\be
\relev_{\text{CHSH}}=2.
\ee

\item For chain Bell inequality.\\
Since the box with one deterministic observable cannot violate the chain inequality, we obtain $\beta^{\max}_{1}=2n-2$. Therefore, from Eq. \eqref{eq:Srelev} we get
\be
\relev_{\text{chain}}=2
\ee
with the substitution $\betaMax=2n$.

\end{enumerate}

\section{Information gain versus disturbance trade-off}\label{secTRADE}
In this section we prove our main result (Eq. \eqref{eq:general} in the main text)
\be
n \dist \geq  \relev \inf-  \loc,
\label{eq:Sgeneral}
\ee
where $\relev$ is given by Eq. \eqref{eq:Srelev}, $\inf=\<\ogentle\omeas\>$ denotes the information gain, and $\loc=\beta^{\max} - \beta$ the degree of locality.

First, let us show that 
\be
\label{eq:Scorrelator}
\inf \equiv\<\ogentle\omeas\>=2 \epsilon
\ee
\begin{proof}
\be
\<\ogentle\omeas\>=p(b_1^g=b_1)-p(b_1^g\neq b_1)=\frac12+\epsilon-(\frac12-\epsilon)=2\epsilon,
\ee
where in the second equality we used the formula (3) from the main text.
\end{proof}

Now, we can prove our main result \eqref{eq:Sgeneral}. 
\begin{proof}
\be
n \dist \geq \beta(p)- \beta(\tilde p) \geq \beta - \beta^{\max} + \relev 2 \epsilon,
\ee
where in the first inequality we used Eq.  \eqref{eq:SDB} and in the second inequality we used Eq. \eqref{eq:Smonogamy} for $\beta=\beta(\tilde p)$.

Therefore
\be
n \dist \geq \relev 2 \epsilon - (\beta^{\max} - \beta)
\ee
and we obtain Eq. \eqref{eq:Sgeneral} with $\inf= 2 \epsilon$ \eqref{eq:Scorrelator} and $\loc = \beta^{\max} - \beta$.
\end{proof}

\section{Generalized chain inequality}\label{secGCHAIN}
Suppose that Alice and Bob receive inputs $x, y \in [n]$ and output $a, b \in \{0,1\}$. We consider the correlation Bell inequality (partial function $\textsc{XOR}$ game) $\mathcal{I}_{n,k}$ described by the coefficient matrix $C = (t_{y-x})_{x,y=1}^n$ with
\begin{equation}
\label{eq:gen-ch}
t_l =
\begin{cases}
\phantom{-} 1,
& \text{if } \vert l \vert \leq k -1 \; \; \vee \; \; l = k,\\
-1, & \text{if } \vert l \vert \geq n-k+2 \; \; \vee \; \; l = -(n-k+1), \\
0 & \text{else}
\end{cases}
\end{equation}
for a fixed parameter $k \leq n/2$. The coefficient matrix thus has the following banded Toeplitz form

\begin{equation}
C=
\begin{array}{l}
  \ldelim\{{2}{0mm}[$k$] \\ \\ \\ \\ \\ \\\\ \\ \\  \\[0mm]  \ldelim\{{2}{4mm}[$k$] \\ \    \end{array} \\[-1ex]
\begin{bmatrix}
\bovermat{k+1 }{1 & 1  & 1 & 1 &0 &} \ldots & 0 & \bovermat{k-1 }{-1 & -1}\\
1 & 1  & 1 & 1 & 1 & 0 & \ldots & 0 & -1\\
1 & 1 & \ldots   & 1 & 1 & 1 & 0 & \ldots & 0\\
0 & 1 & 1 & \ldots  & 1 & 1 & 1 & 0 & 0\\
\vdots & 0 & 1 & 1 & \vdots & 1 & 1 & 1 & 0 \\
\hdotsfor{9}\cr 0 & \vdots & 0 & 1 & 1 & \vdots & 1 & 1 & 1 \\
0 & \vdots & \vdots & 0 & 1 & 1 & \vdots &1 &1  \\
-1 &0 & \ldots & \ldots & 0 & 1 & 1 & 1 & 1\\
-1 & -1 & 0 & \ldots & \ldots & 0 & 1 & 1 & 1\\
-1 & -1 & -1 & 0 & \ldots & \ldots & 0 & 1 & 1
\end{bmatrix} 
\end{equation}

\begin{prop}
The relevance $w(B_i)$ of observable $B_i$ for the inequality $\mathcal{I}_{n,k}$ given by the coefficient matrix in (\ref{eq:gen-ch}) with parameter $k \leq n/2$ is $w(B_i) = 2k$ for any $i \in [n]$. The no-signaling value of the inequality is given by $\beta_{\text{ns}} = 2 k n$. The quantum value of the inequality is given by 
\begin{equation}
\beta_{q} =  n \csc{\left(\frac{\pi}{2n} \right)} \sin{\left(\frac{k\pi}{n} \right)}.
\end{equation}
For $n$ divisible by $k$, the classical value of the inequality is given by
\begin{equation}
\beta_{cl} = 2 k (n - k).
\end{equation}
\end{prop}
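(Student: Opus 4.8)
The plan is to treat the four assertions separately: the no-signaling value and the relevance follow from counting together with the extremal-box machinery of Appendix \ref{secRELEV}, while the quantum value requires a Tsirelson-type optimization and the classical value a combinatorial block argument. Throughout I use that $C=(t_{y-x})$ is a banded circulant matrix with anti-periodic identification $v_{y+n}\mapsto -v_y$, so that each row (and, by the cyclic symmetry, each column) carries exactly $2k$ nonzero entries: the $+1$ central band $|y-x|\le k-1,\ y-x=k$, together with the $-1$ entries produced by wrap-around. In particular $\sum_{x,y}|t_{y-x}|=2kn$.

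For the no-signaling value I would note that any correlation Bell expression satisfies $\beta=\sum_{x,y}t_{y-x}\<A_xB_y\>\le\sum_{x,y}|t_{y-x}|=2kn$ since $|\<A_xB_y\>|\le1$, and that this bound is saturated within no-signaling theory by an extremal box (in the Jones--Masanes classification invoked in Appendix \ref{secRELEV}) realizing $\<A_xB_y\>=\mathrm{sign}(t_{y-x})$ on the support; hence $\beta_{\text{ns}}=2kn$. For the relevance $\relev(B_i)=\betaMax-\betaMax_1$ I would reuse the decomposition of Appendix \ref{secRELEV}: forcing $B_i$ to be deterministic makes its whole column of correlators factorize, so in the optimal box that column contributes zero and the residual correlation matrix is $C$ with column $i$ deleted. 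Since each column also carries exactly $2k$ nonzero coefficients, this removes $2k$ from the algebraic maximum, giving $\betaMax_1=2k(n-1)$ and therefore $\relev(B_i)=2k$, the same for every $i$ by cyclic symmetry.

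The quantum value is the heart of the proof. By Tsirelson's theorem, $\beta_{q}=\max\sum_{x,y}t_{y-x}\<u_x,v_y\>$ over real unit vectors $u_x,v_y$. Exploiting the cyclic (anti-periodic) symmetry of $C$, I would restrict to the equiangular planar ansatz $u_x=(\cos\tfrac{\pi x}{n},\sin\tfrac{\pi x}{n})$ with $v_y$ rotated by a relative half step $\tfrac{\pi}{2n}$ and satisfying $v_{y+n}=-v_y$; the half-period rotation makes the $-1$ wrap-around terms contribute with the same sign as the central band, so every one of the $2k$ band correlators in a given row becomes a cosine along an arithmetic progression of angles with common difference $\tfrac{\pi}{n}$ symmetric about $0$. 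Summing with the identity $\sum_{j=0}^{2k-1}\cos\big(\tfrac{(2j-2k+1)\pi}{2n}\big)=\csc\tfrac{\pi}{2n}\sin\tfrac{k\pi}{n}$ yields $\csc\tfrac{\pi}{2n}\sin\tfrac{k\pi}{n}$ per row and $n$ identical rows, giving the claimed $\beta_{q}=n\csc\tfrac{\pi}{2n}\sin\tfrac{k\pi}{n}$. The step I expect to be the main obstacle is proving that this equiangular configuration is \emph{globally optimal}: this needs either an explicit dual certificate for the Tsirelson semidefinite program, or a symmetrization argument showing that the circulant structure of $C$ concentrates the optimum in the top two-dimensional Fourier mode, so that no higher-dimensional Gram configuration can exceed the planar value.

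Finally, for the classical value with $k\mid n$ I would compute $\beta_{cl}=\max_{a,b\in\{\pm1\}^n}\sum_{x,y}t_{y-x}a_xb_y$. The lower bound comes from the assignment that is constant on blocks of $k$ consecutive settings with alternating block signs, $a_x=b_x=(-1)^{\lfloor(x-1)/k\rfloor}$, which aligns the $+1$ band and cancels the wrap-around $-1$ contributions, producing exactly $2k(n-k)$ agreeing terms; divisibility $k\mid n$ is what makes this assignment periodic and the count exact. The matching upper bound I would obtain by a counting/exchange argument bounding the number of band cells on which $a_xb_y$ can simultaneously equal the coefficient sign. As a consistency check, at $k=1$ all four formulas collapse to the ordinary chain inequality, $\betaMax=2n$, $\relev=2$, $\beta_{q}=2n\cos\tfrac{\pi}{2n}$ and $\beta_{cl}=2n-2$, in agreement with the main text.
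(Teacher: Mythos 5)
Your skeleton matches the paper's: count the $2k$ band entries per row for $\beta_{\text{ns}}$, use the Jones--Masanes extremal boxes for the relevance, exhibit the equiangular qubit strategy for the quantum lower bound and an explicit deterministic assignment for the classical lower bound. However, the two matching \emph{upper} bounds, which you explicitly defer, are precisely where the content of the paper's proof lies, so as it stands the proposal has genuine gaps. For the quantum value the paper supplies exactly the certificate you ask for: it invokes the bound $\beta_q \leq n\,\Vert C\Vert$ (spectral norm of the coefficient matrix, valid for correlation Bell inequalities) and then diagonalizes the ``sign-flipped circulant'' $C$ explicitly, using ansatz eigenvectors with entries $\omega_j^{\,n-i}$, $\omega_j=\exp\!\left(-i\pi(2j+1)/n\right)$ --- the Fourier modes adapted to the antiperiodic shift --- which yields $\Vert C\Vert=\csc(\pi/2n)\sin(k\pi/n)$ and closes the argument. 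Your ``top Fourier mode'' intuition is correct, but without this computation (or an equivalent SDP dual certificate) the optimality claim is unproved. For the classical value the paper's upper bound is not a counting/exchange argument but a decomposition: when $k\mid n$, the coefficient matrix is written as a sum of $k^2$ ordinary chain Bell expressions, each on $n/k$ settings and each classically bounded by $2(n/k-1)$, giving $k^2\cdot 2(n/k-1)=2kn-2k^2$; this is also exactly where the divisibility hypothesis enters. You would need to supply this (or some other) matching upper bound.

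Two smaller points. First, in the relevance argument you assert that forcing $B_i$ to be deterministic makes its column of correlators ``contribute zero''; that is false in general, since a deterministic column can correlate classically with any deterministic observables on Alice's side. The paper handles this by bounding extremal boxes with $k_A$ and $k_B$ deterministic settings by $2k(n-k_A-k_B)+2k_Ak_B$ and checking separately, via the classical value, that making many observables deterministic never helps; the optimum is a single deterministic observable, $\beta^{\max}_1=2kn-2k$, whence $\relev(B_i)=2k$. Second, the classical optimum is already attained by the all-equal assignment $a_x=b_y=+1$ (the $k^2$ wrap-around $-1$ entries are what cost $2k^2$), so your block-alternating assignment, while plausible, is an unnecessary complication whose count you would still have to verify.
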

\begin{proof}
Recall that the relevance $w(B_i)$ is defined by $w(B_i) = \beta^{\text{max}} - \beta_{i}^{\text{max}}$ with $\beta_{i}^{\text{max}}$ being the maximum no-signaling value of the Bell quantity when observable $B_i$ is forced to be deterministic. Now, the maximal no-signaling value of the Bell quantity is evidently equal to the maximal algebraic value (the inequality being an $\textsc{XOR}$ game for which there always exists a no-signaling strategy that wins), and is given by 
\begin{equation}
\beta_{\text{ns}} = \beta^{\text{max}} = 2 k n,
\end{equation}
since for every input $x$ of Alice, there are $2k$ inputs $y$ of Bob such that the coefficients $C_{x,y}$ obey $\vert C_{x,y} \vert = 1$.  

Now, we follow an analogous argument to the total function $\textsc{XOR}$ games by setting observable $B_i$ to be deterministic, and considering all the extremal no-signaling boxes from \cite{JonesMasanes2005}. Let $k_A$ denote the number of Alice's observables for which she returns a deterministic output in the extremal no-signaling box and let $k_B$ denote the number of Bob's observables set to be deterministic. For $k_A, k_B \leq 2k$, the value achieved by this no-signaling strategy is given by
\begin{equation}
\label{eq:beta-i-max}
\beta_{i}^{\text{max}} \leq 2k (n - k_A - k_B) + 2 k_A k_B.  
\end{equation}

The other strategy to check is the fully deterministic (classical) strategy. We claim that for $n$ divisible by $k$ 
\begin{equation}
\label{eq:cl-val}
\beta_{cl} = 2k n - 2k^2. 
\end{equation}
This value is achieved when Alice and Bob deterministically output $a, b = 0$ for all $x, y$. 

We will prove Eq. (\ref{eq:cl-val}) by writing the coefficient matrix $C$ as a sum of $k^2$ chain Bell expressions, each with $n/k$ inputs so that the classical value of the individual chain expressions is $2(n/k - 1)$. 
Accordingly, the corresponding chain expressions are given by
\begin{eqnarray}
\sum_{i=0}^{(n/k)-2} A_{j+i k + l-1} \left(B_{j+ik} + B_{j+(i+1)k}\right) + A_{j+n-k+l-1}\left(B_{j+n-k} - B_j \right) \leq 2(n/k-1)  \; \; \forall j \in [k], l \in [k]
\end{eqnarray}
with $A_{n+m} := - A_m$ for all $m \in [k]$. The classical value (\ref{eq:cl-val}) then follows from the sum of the classical value of the chain inequalities, i.e., $(k^2)(2(n/k - 1)) = 2nk-2k^2$. 
Evidently, the optimal value for $w$ is then given from (\ref{eq:beta-i-max}) by $k_A = 1, k_B = 0$ which achieves the value $2kn - 2k$ giving that $w(B_i) = 2k$. 

We now show the optimal quantum strategy for the game. Consider the strategy given by measuring the state
\begin{equation}
| \phi_{+} \rangle = \frac{1}{\sqrt{2}}  \left(|00 \rangle + |11 \rangle \right)
\end{equation}
with observables  
\begin{eqnarray}
A_x &=& \sin{(\theta_x)} \sigma_x + \cos{(\theta_x)} \sigma_z, \nonumber \\
B_y &=& \sin{(\theta_y)} \sigma_x + \cos{(\theta_y)} \sigma_z, 
\end{eqnarray}
where $\sigma_x, \sigma_z$ are the standard Pauli matrices and the measurement angles are given by
\begin{eqnarray}
\theta_x = (x-1) \frac{\pi}{n}, \quad  \theta_y  = (2y-1) \frac{\pi}{2n}.
\end{eqnarray}
This strategy gives the following correlations
\begin{eqnarray}
 \langle A_{x+j} B_{x} \rangle = \cos{\left(\frac{(2j+1)\pi}{2n}\right)}, \quad \langle A_{x} B_{x+j} \rangle = \cos{\left( \frac{(2j-1)\pi}{2n}\right)} \; \; \forall \;\; 0 \leq j \leq n-1.
\end{eqnarray} 
It therefore achieves the value $\beta_q \geq  \sum_{j=1}^{k} 2 n \cos{\left(\frac{(2j-1) \pi}{2n} \right)}$ for the Bell quantity. Let us now show that this strategy is in fact optimal.

To do this, we show that the strategy achieves the upper bound on $\beta_q$ given as $\beta_q \leq n \Vert C \Vert$ \cite{Wehner2006, RaviAugusiakMurta2016, Linden2007}, where $\Vert C \Vert$ denotes the spectral norm, i.e., the maximal singular value of the coefficient matrix $C$. While $C$ given in (\ref{eq:gen-ch}) is a Toeplitz matrix, it is not circulant, but a ``sign-flipped circulant matrix" with each row obtained from the previous row by a shift to the right and a sign change on the corresponding entry. Still, we consider as an ansatz the system of eigenvectors $| \lambda_j \rangle$ with $j  \in \{0, \dots, n-1\}$ with entries
\begin{eqnarray}
\label{eq:eig-vec}
| \lambda_{j} \rangle_{i} = \omega_{j}^{n-i},
\end{eqnarray}
with $\omega_j = \exp{\left(\frac{-i \pi(2j+1)}{n}\right)}$. The corresponding eigenvalues of $C$ are then given by 
\begin{eqnarray}
\label{eq:eig-val}
\lambda_j = \frac{\sum_{i=1}^{k+1} \omega_j^{n-i} - \sum_{i=n-k+2}^{n} \omega_j^{n-i}}{\omega_j^{n-1}}.
\end{eqnarray}
It is readily seen that the eigenvalue equations are satisfied, the $m$-th eigenvalue equation being, for $m \leq k-1$
\begin{eqnarray}
\left(\sum_{i=1}^{k+m} \omega_j^{n-i} - \sum_{i=n-k+m+1}^{n} \omega_j^{n-i} \right) |\lambda_{j} \rangle_m = \lambda_j \omega_j^{n-m} 
\end{eqnarray}
which is satisfied by (\ref{eq:eig-vec}) and (\ref{eq:eig-val}) by applying multiple times the identity $\exp{\left(-i \pi (2j+1)\right)} = -1$. Similarly, for $k \leq m \leq n-k$,
\begin{eqnarray}
\sum_{i=m-k+1}^{k+m} \omega_j^{n-i}  |\lambda_{j} \rangle_m = \lambda_j \omega_j^{n-m}, 
\end{eqnarray}
and for $n-k+1 \leq m \leq n$,
\begin{eqnarray}
\left(\sum_{i=m-k+1}^{n} \omega_j^{n-i} - \sum_{i=1}^{m-n+k} \omega_j^{n-i} \right) |\lambda_{j} \rangle_m = \lambda_j \omega_j^{n-m}. 
\end{eqnarray}
The singular values of $C$ are then given from (\ref{eq:eig-val}) by $\vert \lambda_j \vert$, so that the upper bound $n \Vert C \Vert$ is given after simplification by 
\begin{eqnarray}
\beta_q \leq \sum_{j=1}^{k} 2 n \cos{\left(\frac{(2j-1) \pi}{2n} \right)} = n \csc{\left(\frac{\pi}{2n} \right)} \sin{\left(\frac{k\pi}{n} \right)}.
\end{eqnarray} 
The qubit strategy achieving this bound shows that the strategy is optimal. 
\end{proof}

For the inequality given by (\ref{eq:gen-ch}), the information gain versus disturbance trade-off is given as
\begin{eqnarray}
\label{Sgen_chain_dist}
\dist \geq \frac{4 k \epsilon}{n} - 2 \left(k - \sum_{j=1}^{k} \cos{\left(\frac{(2j-1) \pi}{2n} \right)}\right).
\end{eqnarray}
The second term tends to zero for appropriate choice of $k$. With $\cos{\left(\frac{(2j-1) \pi}{2n} \right)} = 1 - \left(\frac{(2j-1) \pi}{2n}\right)^2 + O\left(\frac{(2j-1)^4}{n^4}\right)$, and $\sum_{j=1}^k (2j-1)^2 = (4k^2-1)k/3$, 
we see that one may choose up to $k = O(n^{1/2-\delta})$ for any $\delta > 0$ such that $n^{2\delta} > \pi^2/(6 \epsilon)$ to get a non-trivial information gain versus disturbance relation, with $\dist = O(n^{-1/2-\delta})$.  

In Fig. \ref{fig3} we compare the obtained trade-off in Eq. \eqref{Sgen_chain_dist} with the trade-off for chain Bell inequality depicted in the \mt in Eq. \eqref{eq:exchain}, and show the case where the former outperforms the latter. To this end, we choose the number of Alice's measurement choices in a range $n=100,...,1000$.

\begin{figure}[h]
\centering
\includegraphics[height=.27\textwidth]{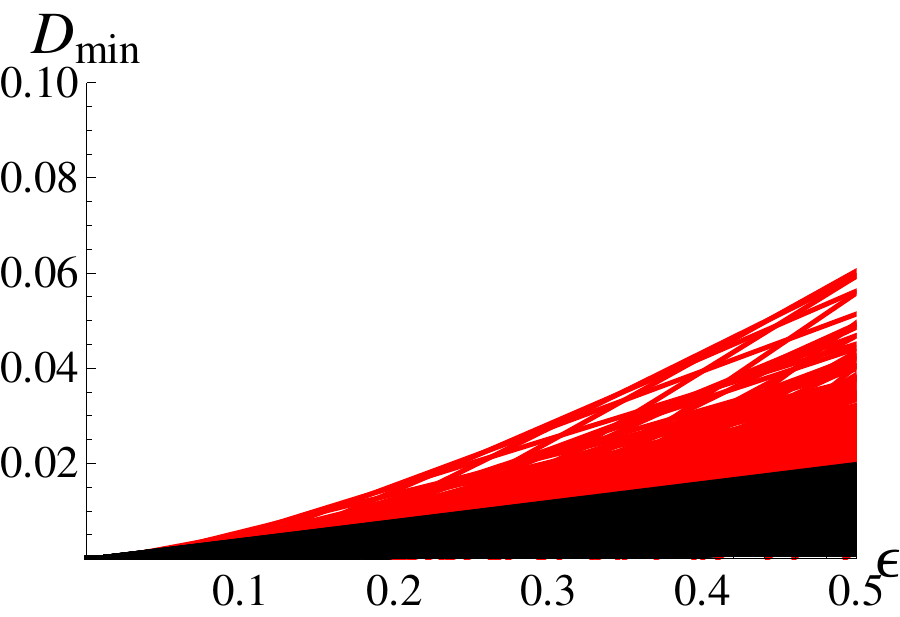}
\caption{The comparison of lower bound $\dist_{\min}$ on average total disturbance implied by chain Bell inequality (red lines): $\dist_{\min} = \frac{4}{n} \epsilon - \frac{1}{n} (2n- \chain)$, where $\chain=2 n \cos(\frac{\pi}{2 n})$, with that implied by generalized chain inequality (black lines): $\dist_{\min} = \frac{4 k \epsilon}{n} - 2 \left(k - \sum_{j=1}^{k} \cos{\left(\frac{(2j-1) \pi}{2n} \right)}\right)$ given in Eq. \eqref{Sgen_chain_dist}, for $n=100,...,1000$.}
\label{fig3}
\end{figure}

\section{Quantum monogamy relation for CHSH inequality}\label{secQMON}

Here, we prove a quantum monogamy relation for the case of CHSH in the following form (Eq. \eqref{eq:qmonCHSH} in the main text)
\be
\label{eq:SqmonCHSH}
\left(\beta_{\text{CHSH}} \right)^2 + 4 |\<\ogentle \omeas\>|^2 \leq 8
\ee
\begin{proof}
To this end, we use the result of \cite{TonerVerstraete2006} that
\be
\label{eq:2chsh}
\left(\beta_{\text{CHSH}}^{AB} \right)^2 +\left(\beta_{\text{CHSH}}^{BC} \right)^2 \leq 8,
\ee
where 
\ben
\label{Schsh1}
\beta_{\text{CHSH}}^{AB} =\< A_1 B_1\> + \< A_1 B_2\> + \< A_2 B_1\> - \< A_2 B_2\>,\\
\label{Schsh2}
\beta_{\text{CHSH}}^{BC}=\< B_1 C_1\> + \< B_1 C_2\> + \< B_2 C_1\> - \< B_2 C_2\>.
\een
Now, let us choose $C_1 = C_2 = \ogentle$. Therefore from \eqref{Schsh2} we get $\beta_{\text{CHSH}}^{BC}=2 |\< B_1 \ogentle\>|$. Substituting this into Eq. \eqref{eq:2chsh}, we obtain \eqref{eq:SqmonCHSH}.
\end{proof}

\end{appendix}

\bibliographystyle{apsrev4-1}
\normalem
\bibliography{Bib_dist_monogamy}

\end{document}